\def\floor#1{\left\lfloor #1 \right\rfloor}
\def\ceil#1{\left\lceil #1 \right\rceil}
\newcommand{\set}[1]{\{#1\}}
\newcommand{\ap}{\textsc{AvoidPattern}\xspace}
\newenvironment{proof}{\par \noindent \textbf{Proof.}
}{\hfill$\Box$\medskip}
\newenvironment{proofof}[1]{\par \noindent \textbf{Proof of #1.}
}{\hfill$\Box$\medskip} 
\newtheorem{theorem}{Theorem}
\newtheorem{lemma}[theorem]{Lemma}
\newcounter{example}
\date{\today}
\begin{document}

\begin{frontmatter}

  \title{Application of entropy compression\\ in pattern avoidance}
  
  \author{Pascal Ochem},
  \ead{Pascal.Ochem@lirmm.fr}
  \ead[url]{http://www.lirmm.fr/\~{}ochem}
  \author{Alexandre Pinlou\thanksref{UPV}}
  \ead{Alexandre.Pinlou@lirmm.fr}
  \ead[url]{http://www.lirmm.fr/\~{}pinlou}
  \thanks[UPV]{Second affiliation: Département de Mathématiques et
    Informatique Appliqués, 
    Université Paul-Valéry, Montpellier 3, Route de Mende, 34199
    Montpellier Cedex 5, France.} 
  
  \address{LIRMM - Univ. Montpellier 2, CNRS - 161 rue Ada, 34095
    Montpellier Cedex 5, France} 

\begin{abstract}
  In combinatorics on words, a word $w$ over an alphabet $\Sigma$ is
  said to avoid a pattern $p$ over an alphabet $\Delta$ if there is no
  factor $f$ of $w$ such that $f= h(p)$ where $h: \Delta^*\to\Sigma^*$
  is a non-erasing morphism. A pattern $p$ is said to be $k$-avoidable
  if there exists an infinite word over a $k$-letter alphabet that
  avoids $p$. We give a positive answer to Problem 3.3.2 in Lothaire's
  book ``Algebraic combinatorics on words'', that is, every pattern
  with $k$ variables of length at least $2^k$ (resp.
  $3\times2^{k-1}$) is 3-avoidable (resp. 2-avoidable).  This improves
  previous bounds due to Bell and Goh, and Rampersad.
\end{abstract}

  \begin{keyword}
    Word; Pattern avoidance.
  \end{keyword}

\end{frontmatter}

\journal{}

\section{Introduction}\label{sec:intro}

A pattern $p$ is a non-empty word over an alphabet
$\Delta=\{A,B,C,\dots\}$ of capital letters called \emph{variables}.
A word $x$ over $\Sigma$ is an instance of $p$ if there exists a
non-erasing morphism $h: \Delta^*\to\Sigma^*$ such that $h(p)=x$. The
avoidability index $\lambda(p)$ of a pattern $p$ is the size of the
smallest alphabet $\Sigma$ such that there exists an infinite word $w$
over $\Sigma$ containing no instance of $p$ as a factor. Bean,
Ehrenfeucht, and McNulty~\cite{BEM79} and Zimin~\cite{Zimin}
characterized unavoidable patterns, i.e., such that
$\lambda(p)=\infty$. We say that a pattern $p$ is $t$-avoidable if
$\lambda(p)\le t$. For more informations on pattern avoidability, we
refer to Chapter 3 of Lothaire's book~\cite{Lothaire:2002}.

In this paper, we consider upper bounds on the avoidability index of
long enough patterns with $k$ variables. Bell and
Goh~\cite{BellGoh:2007} and Rampersad~\cite{Rampersad:2011} used a
method based on power series and obtained the following bounds:
\begin{theorem}[\cite{BellGoh:2007,Rampersad:2011}]
\label{BGR}
Let $p$ be a pattern with exactly $k$ variables.
\begin{itemize}
\item[(a)] If $p$ has length at least $2^k$ then
 $\lambda(p)\le4$.~\cite{BellGoh:2007}
\item[(b)] If $p$ has length at least $3^k$ then
 $\lambda(p)\le3$.~\cite{Rampersad:2011}
\item[(c)] If $p$ has length at least $4^k$ then
 $\lambda(p)=2$.~\cite{Rampersad:2011}
\end{itemize}
\end{theorem}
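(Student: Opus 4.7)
The plan is to invoke the classical power series method (in the style of Bell--Goh and Goulden--Jackson). Fix the alphabet size $q$ and let $a_n$ denote the number of words of length $n$ over a $q$-letter alphabet containing no instance of $p$. A short inclusion--exclusion argument---extend a shorter avoiding word by one letter and subtract those extensions that create a new occurrence of an instance of $p$ as a suffix---yields the recursive lower bound
\[ a_n \ge q \, a_{n-1} - \sum_{w} a_{n-|w|}, \]
where $w$ ranges over the instances of $p$ of length at most $n$. Passing to the generating functions $A(z) = \sum_n a_n z^n$ and $F(z) = \sum_w z^{|w|}$ gives $A(z)(1 - qz + F(z)) \ge 1$ as a formal power series inequality. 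Hence, to produce an infinite $p$-avoiding word over $q$ letters, it suffices to exhibit some $z_0 \in (0, 1/q)$ at which $1 - qz_0 + F(z_0) > 0$.

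Next I would bound $F$ explicitly. An instance of $p$ is determined by the images $h(A_1),\ldots,h(A_k)$ of the variables; if $m_i$ denotes the multiplicity of $A_i$ in $p$ and $n_i = |h(A_i)| \ge 1$, then the number of instances with a given length profile is at most $\prod_i q^{n_i}$ and the total length of the instance is $\sum_i m_i n_i$. Summing the resulting geometric series variable by variable gives
\[ F(z) \le \prod_{i=1}^{k} \frac{qz^{m_i}}{1 - qz^{m_i}}, \]
valid whenever $qz^{m_i} < 1$ for every $i$. The crucial feature is that $\sum_i m_i = \ell$, the length of $p$, so the length hypothesis $\ell \ge c^k$ (for the appropriate constant $c$) should force this product to be very small at a well-chosen $z_0$.

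The core of the proof is then to select $z_0 \in (0, 1/q)$ at which the bound above gives $F(z_0) \le qz_0 - 1$. A natural scheme is to take $z_0$ so that $qz_0 = \alpha$ for a constant $\alpha < 1$ depending on $q$, and then to control $\sum_i \log\bigl(qz_0^{m_i}/(1 - qz_0^{m_i})\bigr)$ using convexity together with the constraints $m_i \ge 1$ and $\sum_i m_i = \ell \ge c^k$. The main obstacle is precisely this optimization: the three alphabet sizes $q = 4, 3, 2$ lead to three different extremal multiplicity profiles $(m_i)$ and correspondingly to the three different exponent bases $2$, $3$, $4$ in the length hypothesis, so each case needs its own tuning of $z_0$. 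A secondary technical point is that the displayed recurrence must avoid double-counting overlapping occurrences of instances; the standard fix is to restrict the sum on the right-hand side to occurrences ``first appearing'' at position $n$ via a cluster-style refinement, which has to be set up for morphic images of $p$ rather than a fixed forbidden word.
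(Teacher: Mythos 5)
First, note that the paper does not prove this statement at all: Theorem~1 is quoted from Bell--Goh and Rampersad, so your proposal has to be judged as a reconstruction of their power-series argument. Your skeleton (the recurrence $a_n \ge q\,a_{n-1} - \sum_w a_{n-|w|}$, which needs no cluster-type correction since only an \emph{upper} bound on the bad extensions is required, and the bound $F(z)\le \prod_i \frac{qz^{m_i}}{1-qz^{m_i}}$) is the right one, but the sufficiency criterion you state is wrong, and wrong in a way that matters. For $z_0\in(0,1/q)$ one always has $1-qz_0+F(z_0)>0$ (both $1-qz_0$ and $F(z_0)$ are positive), so that condition is vacuous and cannot imply avoidability --- otherwise every pattern would be avoidable over every alphabet. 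The correct criterion, which is what Golod's lemma (as used by Bell--Goh and Rampersad) or a direct induction showing $a_n\ge \gamma\,a_{n-1}$ with $\gamma=1/z_0$ delivers, is the existence of $z_0$ with $qz_0>1$ (and $qz_0^{m_i}<1$ for all $i$) such that $1-qz_0+F(z_0)\le 0$, i.e.\ $F(z_0)\le qz_0-1$. You do write the inequality $F(z_0)\le qz_0-1$ later, but you keep the interval $(0,1/q)$, where it is unsatisfiable since $F\ge 0> qz_0-1$; and the formal-series inequality $A(z)(1-qz+F(z))\ge 1$ by itself does not bridge the gap --- you need the lemma (or the induction) explicitly.

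The second, more structural gap is that your optimization ``over profiles $m_i\ge 1$ with $\sum_i m_i\ge c^k$'' cannot be carried out for the pattern itself: if some variable occurs exactly once ($m_i=1$), the factor $\frac{qz^{m_i}}{1-qz^{m_i}}$ forces $z_0<1/q$, exactly where the (correct) criterion is unattainable; equivalently, a singleton variable makes the number of instances of length $n$ of order $q^n$, so no counting of this kind can succeed on the full pattern. The cited proofs (and this paper, via Claim~\ref{cl:balanced}) first extract from any pattern with $k$ variables and length at least $2^k$ (resp.\ $3^k$, $4^k$) a \emph{doubled} factor with $k'$ variables and length at least $2^{k'}$ (resp.\ $3^{k'}$, $4^{k'}$) by the halving argument, and only then run the power-series estimate, now with every $m_i\ge 2$ so that one may take $z_0>1/q$. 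Without this reduction the theorem as you plan to prove it would fail at the very first pattern of the form $A_1A_2\cdots A_{k-1}A_k^{\ell-k+1}$; with it, the convexity/tuning step you describe becomes the routine part. So the missing ideas are: the correct form (direction and range) of the positivity criterion together with the lemma that justifies it, and the reduction to doubled patterns, which is where the thresholds $2^k$, $3^k$, $4^k$ actually enter.
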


Our main result improves these bounds:
\begin{theorem}
\label{2tok}
Let $p$ be a pattern with exactly $k$ variables.
\begin{itemize}
\item[(a)] If $p$ has length at least $2^k$ then $\lambda(p)\le3$.
\item[(b)] If $p$ has length at least $3\times2^{k-1}$ then $\lambda(p)=2$.
\end{itemize}
\end{theorem}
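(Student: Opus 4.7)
The approach is entropy compression. I would design a randomized algorithm \ap that attempts to build arbitrarily long $p$-avoiding words over a $t$-letter alphabet, with $t=3$ for part (a) and $t=2$ for part (b). In each round \ap draws a uniform random letter $r\in\{1,\ldots,t\}$, appends it to the current word $w$, and if $w$ now ends with an occurrence $h(p)$ of the pattern, it backtracks by deleting enough recent letters so that the offending suffix is broken. The goal is to show that with positive probability the algorithm does not backtrack indefinitely and therefore produces an arbitrarily long $p$-avoiding word, which gives $\lambda(p) \le t$.

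The core of the proof is an injective encoding of the trajectory. After $n$ rounds, the random string $r_1\cdots r_n$ determines a current word $w_n$ and a log $\Lambda$ that records, for each backtracking step, the tuple $(a_1,\ldots,a_k)$ of image-lengths assigned to the $k$ variables of $p$ by the witnessing morphism. From the pair $(w_n,\Lambda)$ one can replay the process backwards and uniquely recover $r_1\cdots r_n$, so the number of trajectories of length $n$ is at most the number of valid pairs $(w_n,\Lambda)$; proving that this number is strictly less than $t^n$ forces the existence of at least one random string on which \ap never gets stuck, and a standard compactness/König argument then yields an infinite $p$-avoiding word.

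The combinatorial heart of the proof is bounding the number of logs $\Lambda$. Each log entry is specified by a tuple $(a_1,\ldots,a_k)$ of positive integers with $\sum_{i=1}^{k} n_i a_i = L$, where $n_i$ is the multiplicity of the $i$-th variable in $p$, $\ell=\sum_i n_i$ is the length of $p$, and $L$ is the length of the erased instance. A generating-function argument over instance lengths, weighted to reflect the cost of each erasure against the letters ``saved'' in $w_n$, reduces the problem to checking convergence of a power series in $1/t$; the thresholds $\ell \ge 2^k$ and $\ell \ge 3\cdot 2^{k-1}$ should be calibrated precisely so that this convergence holds.

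The main obstacle is that the bound must hold uniformly over all multiplicity profiles $(n_1,\ldots,n_k)$ with $n_i \ge 1$ and $\sum_i n_i \ge \ell$. The extremal case is not a priori obvious: patterns with many low-multiplicity variables leave more freedom in choosing $h$, while patterns with a few high-multiplicity variables produce longer instances at a fixed cost. Resolving this amounts to a convex-style optimization over the $n_i$'s subject to $\sum_i n_i \ge \ell$; I expect the tight case to arise when the $n_i$ are as small as possible (most of them equal to $1$), which is what produces the exponential thresholds $\ell \ge 2^k$ for $t=3$ and $\ell \ge 3\cdot 2^{k-1}$ for $t=2$.
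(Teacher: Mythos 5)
Your overall plan for part (b) --- an append/backtrack algorithm, an injective record, and a counting argument showing fewer records than inputs --- is indeed the paper's strategy (the paper phrases it deterministically over input vectors rather than probabilistically, which is immaterial). But two essential ingredients are missing, and one of them is precisely where the hypothesis ``length at least $3\times 2^{k-1}$'' is used. The paper never runs the compression on an arbitrary pattern: it first proves that every pattern with at most $k$ variables and length at least $f\times 2^{k-1}$ contains a \emph{doubled} (in fact balanced) factor $p'$ with $k'$ variables and length at least $f\times 2^{k'-1}$, and the entropy-compression argument is applied only to doubled patterns with $k\ge 4$ (the cases $k\le 3$ are quoted from Thue, Roth, Cassaigne and Ochem). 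Your sketch instead ranges over all multiplicity profiles with $n_i\ge 1$ and even predicts the extremal case to be ``most $n_i$ equal to $1$''; in that regime the method cannot work at all: a variable occurring once has an arbitrary image whose letters must be written into the log verbatim to keep the encoding injective, so the erasure compresses nothing --- consistent with the fact that the Zimin-type patterns of length $2^k-1$, built from such low-multiplicity variables, are unavoidable. The exponential threshold is not obtained by ``calibrating a series''; it is consumed by the doubled-factor extraction, after which the analysis needs only doubledness and the lower bound $q(k)=3\times 2^{k-1}$ on the length of every erased instance.

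Second, your encoding is not injective as described. Storing only the length tuple $(a_1,\dots,a_k)$ of each erased instance, together with $w_n$, does not permit the backward replay: the erased letters are gone and their lengths do not determine them, and you also never record at which steps erasures occurred. The paper's record has three parts: a Dyck-type word $D$ marking appended and erased letters (whose descents have length at least $q(k)$, bounded via the Esperet--Parreau lemmas, giving the constants $\gamma_{24}$, $\gamma_{40}$, $\gamma_{100}$), the list $L$ of partial sums of variable lengths, and --- crucially --- the list $X$ of the actual words $A_1A_2\cdots A_k$. Doubledness guarantees $|A_1\cdots A_k|\le\lfloor\ell/2\rfloor$, so erasing $\ell$ letters costs only about $\ell/2$ bits of content; this factor $2$ is the entire source of compression, and the rest of the proof is the explicit numerical verification (separately for $k=4$ and $k\ge 5$, with a generating-function count of the sets $L'$ for $k=4$ and $24\le\ell\le 99$) that $|\mathcal{D}|\cdot|\mathcal{L}|\cdot|\mathcal{X}|=o(2^t)$, none of which appears in your sketch. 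Finally, the paper does not prove part (a) by entropy compression at all: after the same doubled reduction it merges occurrences of high-multiplicity variables and invokes Bell and Goh's theorem that every doubled pattern with at least $6$ variables is $3$-avoidable; your plan to rerun the machinery with $t=3$ is not carried out and would in any case have to treat $k\le 3$ separately (e.g.\ $p=AA$, where a counting argument over $3$ letters is not available and the paper simply cites Thue).
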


Theorem~\ref{2tok} gives a positive answer to Problem 3.3.2 of
Lothaire's book~\cite{Lothaire:2002}. The bound $2^k$ in
Theorem~\ref{2tok}.(a) is tight in the sense that for every $k\ge 1$,
the pattern $p_k$ with $k$ variables in the family
$\{A,$ $ABA,$ $ABACABA,$ $ABACABADABACABA,\dots\}$ has length $2^k-1$ and is
unavoidable. Similarly, the bound $3\times2^{k-1}$ in
Theorem~\ref{2tok}.(b) is tight in the sense that
for every $k\ge 1$, the pattern with $k$ variables in the family
$\{AA,$ AABAA,$ AABAACAABAA,$ $AABAACAABAADAABAACAABAA,\dots\}$ has length
$3\times2^{k-1}-1$ and is not 2-avoidable. Hence, this
shows that the upper bound 3 of Theorem~\ref{2tok}.(a) is best possible.\\

The avoidability index of every pattern with at most 3 variables is
known, thanks to various results in the literature. In particular,
Theorem~\ref{2tok} is proved for $k\le 3$:
\begin{itemize}
\item For $k=1$, the famous results of Thue~\cite{Thue06,Thue12} give
$\lambda(AA)=3$ and $\lambda(AAA)=2$.
\item For $k=2$, every binary pattern of length at least 4 contains a
square, and is thus 3-avoidable. Moreover, Roth~\cite{Roth:1992}
proved that every binary pattern of length at least 6 is 2-avoidable.
\item For $k=3$, Cassaigne~\cite{Cassaigne:1994} began and the first
author~\cite{Ochem:2004} finished the determination of the
avoidability index of every pattern with at most 3 variables. Every
ternary pattern of length at least 8 is 3-avoidable and every binary
pattern of length at least 12 is 2-avoidable.
\end{itemize}

So, there remains to prove the cases $k\ge 4$.


Section~\ref{sec:preliminary} is devoted to some preliminary results.
We prove Theorem~\ref{2tok}.(a) in Section~\ref{3-a} as a corollary of
a result of Bell and Goh~\cite{BellGoh:2007}. In Section~\ref{2-a}, we
prove Theorem~\ref{2tok}.(b) using the so-called \emph{entropy compression method}.

\section{Preliminary results}\label{sec:preliminary}

Let $p$ be a pattern over $\Delta=\set{A,B,C,\ldots}$. An
\emph{occurrence} $y$ of $p$ is an assignation of a non-empty words
over $\Sigma$ to every variable of $p$ that form a factor. Note that
two distinct occurrences of $p$ may form the same factor. For example,
if $p=ABA$ then the occurrence $y=(A=00; B=1)$ of $p$ forms the factor
$00100$; on the other hand, $y'=(A=0; B=010)$ is a distinct occurrence
of $p$ which forms the same factor $00100$.

A pattern $p$ is \emph{doubled} if every variable of $p$ appears at least twice in $p$.

A pattern $p$ is \emph{balanced} if it is doubled and every variable
of $p$ appears both in the prefix and the suffix of length
$\floor{\frac{|p|}{2}}$ of $p$. Note that if the pattern has odd
length, then the variable $X$ that appears in the middle of $p$ (i.e. in
position $\floor{\frac{|p|}{2}} + 1$) must appear also in the prefix
and in the suffix in order to make $p$ balanced.

\begin{claim}\label{cl:balanced}
  For every integer $f\ge 2$, every pattern with at most $k$ variables and length
  at least $f\times 2^{k-1}$ contains a balanced pattern $p'$ with at most
  $k'\ge 1$ variables and length at least $f\times 2^{k'-1}$ as a factor.
\end{claim}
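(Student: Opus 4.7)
The plan is to prove this by induction on $k$, the number of variables. The base case $k=1$ is immediate: a pattern over a single variable $A$ of length at least $f\ge 2$ is just $A^{|p|}$, and the variable $A$ trivially appears in both the prefix and suffix of length $\lfloor |p|/2\rfloor\ge 1$, so $p$ is itself balanced and we may take $p'=p$.

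For the inductive step, assume the claim for all smaller variable counts, and let $p$ have at most $k$ variables with $|p|\ge f\cdot 2^{k-1}$. If $p$ is already balanced, take $p'=p$. Otherwise, I want to produce a factor $q$ of $p$ with at most $k-1$ variables and length at least $f\cdot 2^{k-2}$, and then apply the inductive hypothesis to $q$. The definition of ``balanced'' gives exactly three failure modes, and each one naturally supplies such a factor:
\begin{itemize}
\item If some variable $X$ of $p$ appears at most once, then deleting the (at most one) position of $X$ splits $p$ into a prefix and a suffix whose lengths sum to at least $|p|-1\ge f\cdot 2^{k-1}-1$. The longer of the two has length at least $\lceil(f\cdot 2^{k-1}-1)/2\rceil=f\cdot 2^{k-2}$ (using $k\ge 2$ so that $f\cdot 2^{k-2}$ is an integer) and is a factor avoiding $X$.
\item If $p$ is doubled but some variable $X$ is absent from the prefix of length $\lfloor|p|/2\rfloor$, then that prefix itself is a factor with at most $k-1$ variables and length $\lfloor|p|/2\rfloor\ge f\cdot 2^{k-2}$.
\item The symmetric case, where $X$ is absent from the suffix of length $\lfloor|p|/2\rfloor$, is handled identically using that suffix.
\end{itemize}
In each case, applying the induction hypothesis to $q$ yields a balanced factor $p'$ of $q$ (hence of $p$) with $k'\ge 1$ variables and $|p'|\ge f\cdot 2^{k'-1}$, as required.

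The main subtlety, which is where I would be most careful, is the length bookkeeping in the ``not doubled'' case: one must check that $\lceil(L-1)/2\rceil\ge f\cdot 2^{k-2}$ when $L\ge f\cdot 2^{k-1}$, which relies on $f\cdot 2^{k-2}$ being an integer and hence needs $k\ge 2$ (a harmless assumption since the $k=1$ base case is handled separately). The rest of the argument is purely structural and the induction closes without additional effort.
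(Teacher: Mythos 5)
Your proof is correct and follows essentially the same induction on $k$ as the paper: when some variable is missing from the prefix or suffix of length $\lfloor |p|/2\rfloor$, pass to that half (which has at most $k-1$ variables and length at least $f\cdot 2^{k-2}$) and apply the induction hypothesis. Your separate first case (a variable occurring only once) is harmless but redundant, since the two halves of length $\lfloor |p|/2\rfloor$ are disjoint, so a variable occurring at most once already fails to appear in one of them, which is the single case the paper treats.
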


\begin{proof}
  We prove this claim by induction on $k$. If $k=1$, then $p$ has size
  at least $f\ge 2$ and is clearly balanced. Suppose this is true
  for some $k=n$, i.e. $p$ with $n$ variables and length at least
  $f\times 2^{n-1}$ contains a balanced pattern $p'$ as a factor
  with at most $k'$ variables and length at least $f\times
  2^{k'-1}$. Let $k=n+1$ and let $p_1$ (resp. $p_2$) be the prefix
  (resp. the suffix) of $p$ of size $\floor{\frac{|p|}{2}}$. If $p$
  is not balanced, then there exists a variable $X$ in $p$ that does
  not occur in $p_i$ for some $i\in\{1,2\}$. Thus, $p_i$ has 
  at most $k-1=n$ variables and length at least
  $f\times 2^{n-1}$. Therefore, by induction hypothesis, $p$
  contains a balanced pattern with at most $k'$ variables and
  length at least $f\times 2^{k'-1}$ as a factor.
\end{proof}

In the following, we will only use the fact that the pattern $p'$ in
Claim~\ref{cl:balanced} is doubled instead of balanced.

\section{3-avoidable long patterns}\label{3-a}
We prove Theorem~\ref{2tok}.(a) as a corollary of
the following result of Bell and Goh~\cite{BellGoh:2007}:
\begin{lemma}[\cite{BellGoh:2007}]
\label{k6doubled}
Every doubled pattern with at least 6 variables is 3-avoidable.
\end{lemma}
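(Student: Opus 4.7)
The plan is to follow the power-series argument of Bell and Goh. Fix a $3$-letter alphabet $\Sigma$, let $c_n$ denote the number of words of length $n$ over $\Sigma$ avoiding $p$, and set $F(x) = \sum_{n \geq 0} c_n x^n$. A standard cluster-method recursion---extend a $p$-avoiding word of length $n-1$ by one letter, then subtract those extensions whose last position completes an instance of $p$ as a suffix---gives the coefficient-wise inequality $F(x)\,G(x) \geq 1$, where
\[
G(x) \;=\; 1 - 3x + U(x), \qquad U(x) \;=\; \sum_{i \geq 1} a_i\, x^i,
\]
and $a_i$ counts the length-$i$ instances of $p$ over $\Sigma$. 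Hence it suffices to exhibit $x_0 > 0$ with $G(x_0) > 0$: the formal inverse $1/G$ then dominates $F$ from below, the standard positivity argument forces $c_n > 0$ for every sufficiently large $n$, and an infinite $p$-avoiding word exists by K\"onig's lemma.

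The doubled hypothesis supplies a strong bound on $U(x)$. An instance $h(p)$ of length $n$ is determined by its nonempty images $h(A_1),\ldots,h(A_k)$; writing $\ell_j = |h(A_j)|$ and letting $m_j \geq 2$ denote the multiplicity of $A_j$ in $p$, we have $n = \sum m_j \ell_j \geq 2 \sum \ell_j$, so the total content of the images is at most $n/2$ symbols. Multiplying the $3^{\sum \ell_j}$ choices out factor by factor yields
\[
U(x)\;\leq\;\prod_{j=1}^{k}\sum_{\ell\geq 1} 3^{\ell}x^{2\ell}\;=\;\left(\frac{3x^2}{1-3x^2}\right)^{\!k}\qquad\text{for } 0 < x < 1/\sqrt{3}.
\]

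Next I would pick an explicit $x_0$ slightly above $1/3$ and verify, for every $k \geq 6$, the key numerical inequality
\[
\left(\frac{3x_0^2}{1-3x_0^2}\right)^{\!k}\;<\;3x_0 - 1,
\]
which guarantees $G(x_0) > 0$. For instance, $x_0 = 0.36$ gives a base of roughly $0.636$, whose sixth power is about $0.07$, just below $3(0.36) - 1 = 0.08$; since the base is a fixed number strictly less than $1$, taking $k \geq 6$ factors is enough to beat the right-hand side and the inequality only gets easier for larger $k$.

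The main obstacle is precisely this numerical trade-off. One must balance two opposing effects: $x_0$ close to $1/3$ shrinks $3x_0 - 1$ to zero, while $x_0$ close to $1/\sqrt{3}$ blows up the base $3x_0^2/(1 - 3x_0^2)$, and only a narrow window of $x_0$ satisfies both constraints. This window exists for $k \geq 6$ but not for $k \leq 5$, which explains the threshold in the lemma: improving it would require either a sharper bound on $U(x)$ exploiting finer structure of doubled patterns, or a completely different method such as entropy compression (as used for part~(b) of Theorem~\ref{2tok}).
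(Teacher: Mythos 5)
Note first that the paper does not prove this lemma at all---it is quoted from Bell and Goh---so what you are reconstructing is Bell--Goh's power-series argument, and your skeleton (the recursion $c_n\ge 3c_{n-1}-\sum_i a_ic_{n-i}$, the bound $U(x)\le\bigl(\tfrac{3x^2}{1-3x^2}\bigr)^k$ coming from the doubled hypothesis, and the numerical check at $x_0$ slightly above $1/3$) is indeed the right skeleton. However, the pivotal analytic step is stated with the wrong sign and is not actually proved. You claim ``it suffices to exhibit $x_0>0$ with $G(x_0)>0$''; this is vacuous, since $G(0)=1$ and $G$ is continuous, so such an $x_0$ exists for \emph{every} pattern over \emph{every} alphabet, and the criterion would ``prove'' that unavoidable patterns are avoidable. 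Moreover, the mechanism you invoke---``the formal inverse $1/G$ dominates $F$ from below''---requires all coefficients of $1/G$ to be nonnegative, which does not follow from positivity of $G$ at one point (e.g.\ $G(x)=1-x+x^2$ is positive on all of $\mathbb{R}$, yet $1/G=1+x-x^3-x^4+\cdots$ has negative coefficients). Note also that the inequality you actually verify, $U(x_0)\le\bigl(\tfrac{3x_0^2}{1-3x_0^2}\bigr)^k<3x_0-1$, gives $G(x_0)<0$, not $G(x_0)>0$, so your prose and your computation contradict each other.

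The condition $G(x_0)\le 0$ for some $x_0\in[\tfrac13,\tfrac1{\sqrt3})$ is in fact the correct sufficient condition, but you must supply the bridge from it to avoidability, which you never do. The standard bridge is an induction showing $c_n\ge \alpha\,c_{n-1}$ with $\alpha=1/x_0>1$: assuming this for smaller indices, $c_{n-i}\le\alpha^{-(i-1)}c_{n-1}$, so $c_n\ge c_{n-1}\bigl(3-\sum_i a_i\alpha^{1-i}\bigr)$, and the requirement $3-\sum_i a_i\alpha^{1-i}\ge\alpha$ is exactly $G(1/\alpha)\le 0$ (the base case $c_1=3\ge\alpha$ uses $x_0\ge 1/3$, which your choice $x_0=0.36$ satisfies). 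With that induction in place, $c_n>0$ for all $n$ and an infinite avoiding word follows by K\"onig's lemma, and your computation ($0.6361^6\approx 0.066<0.08$ at $x_0=0.36$, decreasing in $k$) correctly closes the case $k\ge 6$. So the numerics are fine and match Bell--Goh; the gap is that the logical core tying the numerical inequality to the conclusion is misstated (reversed sign) and left unjustified, and as written the argument does not go through.
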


\begin{proofof}{Theorem~\ref{2tok}.(a)}
We want to prove that every pattern with exactly $k$ variables and
length at least $2^k$ is 3-avoidable, or equivalently, that every pattern
with at most $k$ variables and length at least $2^k$ is 3-avoidable.
By Claim~\ref{cl:balanced}, every such pattern contains a doubled pattern $p'$
as a factor with at most $k'\ge 1$ variables and length at least $2^{k'}$.
So there remains to show that every doubled pattern with at most $k$
variables and length at least $2^k$ is $3$-avoidable. As discussed in
the introduction, the case of patterns with at most $3$ variables has
been settled. Now, it is sufficient to prove that doubled patterns of
length at least $2^4=16$ are 3-avoidable.

Suppose that $p_1$ is a doubled pattern containing a variable $X$ that
appears at least 4 times. Replace $2$ occurrences of $X$ with a new
variable to obtain a pattern $p_2$. Example: We replace the first and
third occurrence of $B$ in $p_1=ABBCDBCABDDCB$ by a new variable $E$ to obtain
$p_2=AEBCDECABDDCB$. Then $p_2$ is a doubled pattern such that
$|p_1|=|p_2|$ and $\lambda(p_1)\le\lambda(p_2)$, since every
occurrence of $p_1$ is also an occurrence of $p_2$.

Given a doubled pattern $p$ of length at least $16$, we make such
replacements as long as we can. We thus obtain a doubled pattern $p'$
of length at least $16$ such that $\lambda(p)\le\lambda(p')$.
Moreover, every variable in $p'$ appears either $2$ or $3$ times and
therefore $p'$ contains at least $\ceil{16/3}=6$
variables. So $p'$ is $3$-avoidable by Lemma~\ref{k6doubled}. Thus $p$ is
$3$-avoidable, which finishes the proof.
\end{proofof}

\section{2-avoidable long patterns}\label{2-a}

We want to prove that every pattern with exactly $k$ variables and
length at least $3\times 2^{k-1}$ is 2-avoidable, or equivalently,
that every pattern with at most $k$ variables and length at least
$3\times 2^{k-1}$ is 2-avoidable. By Claim~\ref{cl:balanced}, every
such pattern contains a doubled pattern $p'$ as a factor with at most
$k'\ge 1$ variables and length at least $3\times 2^{k'-1}$. So there
remains to show that every doubled pattern with at most $k$ variables
and length at least $3\times 2^{k-1}$ is $2$-avoidable.

As discussed in the introduction, the case of patterns with at most
$3$ variables has been settled. Now, it is sufficient to prove
Theorem~\ref{2tok}.(b) for doubled patterns and $k\ge 4$. 

Let $\Sigma=\set{0,1}$ be the alphabet. For the remaining of this
section, let $k\ge 4$ and $q(k)=3\times 2^{k-1}$.

Suppose by contradiction that there exists a doubled pattern $p$ on $k$
variables and length at least $q(k)$ that is not $2$-avoidable.
Then there exists an integer $n$ such that any word $w\in\Sigma^n$ contains $p$.
We put an arbitrary order on the $k$ variables of $p$ and call $A_j$ the $j$-th variable of $p$.

\subsection{The algorithm \ap}

Let $V=\set{0,1}^t$ be a vector of length $t$. 
The following algorithm takes the vector $V$
as input and returns a word $w$ avoiding $p$ and a data structure $R$
that is called a \emph{record} in the remaining of the paper.

\begin{algorithm}[H]
\DontPrintSemicolon
\LinesNumbered

\SetKwInOut{Input}{Input}\SetKwInOut{Output}{Output}
\Input{$V$.}
\Output{$w$ (a word avoiding $p$) and $R$ (a data structure).\vspace{.2cm}}

  $w\gets \varepsilon$ 

  $R\gets \emptyset$
  
  \For{$i\gets 1$ \KwTo $t$}{
    Append $V[i]$ (the $i$-th letter of $V$) to $w$

    Encode in $R$ that a letter was appended to $w$
    
    \If{$w$ contains an occurrence of $p$ of length $\ell$}{
      Encode in $R$ the occurrence of $p$

      Erase the suffix of length $\ell$ of $w$
    }
  }
  \Return $R$, $w$
  \caption{\ap\label{algo:ap}}
\end{algorithm}

The way we encode information in $R$ at lines~5
and~7 will be explained in Subsection~\ref{subsec:record}.

In the algorithm \ap, let $w_i$ be the word $w$ after $i$ steps.
Clearly, $w_i$ avoids $p$ at each step. By contradiction hypothesis,
the resulting word $w$ of the algorithm (that is $w_t$) has length less than $n$.
We will prove that each output of the algorithm allows to determine the input.
Then we obtain a contradiction by showing that the number of possible outputs
is strictly smaller than the number of possible inputs
when $t$ is chosen large enough compared to $n$.
This implies that every pattern $p$ with at most $k$ variables and length at least
$q(k)$ is $2$-avoidable.

To analyze the algorithm, we borrow ideas from graph coloring
problems~\cite{Joret:2012,Esperet:2012}. These results are based on
the Moser-Tardos~\cite{moser:2010} entropy-compression method which
is an algorithmic proof of the Lov\'asz Local Lemma.

\subsection{The record $R$}\label{subsec:record}

An important part of the algorithm is to keep the record $R$ of
each step of the algorithm.
Let $R_i$ be the record after $i$ steps of the algorithm \ap. On one
hand, given $V$ as input of the algorithm, this produces a pair
$(R_t,w_t)$. On the other hand, given
a pair $(R_t,w_t)$, we will show in Lemma~\ref{lem:inj} that we can
recover the entire input vector $V$.
So, each input vector $V$ produces a distinct pair $(R_t,w_t)$.
 
Let $\mathcal{V}$ be the set of input vectors $V$ of size $t$, let
$\mathcal{R}$ be the set of records $R$ produced by the algorithm \ap
and let $\mathscr{O}$ be the set of different outputs $(R_t,w_t)$.
After the execution of the algorithm ($t$ steps), $w_t$ avoids $p$ by
definition and therefore $|w_t|<n$ by contradiction hypothesis.
Hence, the number of possible final words $w_t$ is independent from
$t$ (it is at most $2^n$). We then clearly have $|\mathscr{O}| \le
2^n\times |\mathcal{R}|$.  We will prove that $|\mathcal{V}|\le
|\mathscr{O}|$ and that $|\mathcal{R}|=o(2^t)$ to obtain the
contradiction $2^t=|\mathcal{V}|\le |\mathscr{O}| \le
2^n\times|\mathcal{R}|=o(2^t)$.

The record $R$ is a triplet $R=(D,L,X)$ where $D$ is a binary word
(each element is $0$ or $1$), $L$ is a vector of $(k-1)$-sets of
non-zero integers and $X$ is a vector of binary words.  At the
beginning, $D$, $L$ and $X$ are empty.  At step $i$ of the algorithm,
we append $V[i]$ to $w_{i-1}$ to get $w'_i$.

If $w'_i$ contains no occurrence of $p$, then we append $0$ to $D$ to
get $R_i$ and we set $w_i=w'_i$.  Otherwise, suppose that $w'_i$
contains an occurrence $y$ of $p$ that forms a factor $f$ of length
$\ell$ ($f$ is the $\ell$ last letters of $w'_i$). Recall that $A_j$
is the $j$-th variable of $p$.  Let $\ell_j=|A_j|$ in the factor $f$,
let $L_1 = \ell_1$, $L_j=L_{j-1}+\ell_j$ for $j\ge 2$.  Let
$L'=\set{L_1,L_2,\ldots,L_{k-1}}$ be a $(k-1)$-set of non-zero
integers.  Let $X'$ be the binary word obtained from $A_1\cdot
A_2\cdot\ldots\cdot A_k$ (where ``$\cdot$'' is the concatenation
operator) followed by as many $0$'s as necessary to get length
$\floor{\frac{\ell}{2}}$. Note that we necessarily have $|A_1\cdot
A_2\cdot\ldots\cdot A_k|\le \floor{\frac{\ell}{2}}$ since the pattern
is doubled.
Eventually, to get $R_i$,
we append the factor $01^\ell$ to $D$;
we add $L'$ as the last element of $L$; finally we add $X'$ as the last element of $X$.

\medskip

\textbf{Example:} Let us give an example with $k=3$, $p=ACBBCBBABCAB$
  and \linebreak
  $V=[0,0,1,0,0,1,1,0,0,1,1,1,0,0,1,1,0,1,1,1,0,0,0,1,1,0]$. The
  variables of $p$ were initially ordered as $(A,B,C)$. For the first
  $24$ steps, no occurrence of $p$ appeared, so at each step $i\le
  24$, we append $V[i]$ to $w_{i-1}$ and we append one $0$ to $D$.
  Hence, at step $24$, we have:
  \begin{itemize}
    \item $w_{24}=001001100111001101110001$ 
    \item $\displaystyle R_{24} =\left\{ 
        \begin{array}{rcl}
          D&=&000000000000000000000000=0^{24}\\ 
          L&=&[\ ]\\ 
          X&=&[\ ]
        \end{array} \right.$
  \end{itemize} 
  Now, at step $25$, we first append $V[25] = 1$ to $w_{24}$ to get
  $w'_{25}$. The word $w'_{25}$ contains an occurrence
  $y=(A=01;B=1;C=100)$ of $p$ which forms a factor of length $21$ (the
  $21$ last letters of $w'_{25}$). Then we set $L'=\set{2,3}$ and $X'=
  0111000000$ (this is $A\cdot B\cdot C$ followed by four $0$'s in order to
  get a binary word of length $10=\floor{\frac{21}{2}}$).
  Eventually, to get $w_{25}$ and $R_{25}$, we erase the suffix of length $21$
  of $w'_{25}$ to get $w_{25}$, we append the factor $01^{21}$ to $D$,
  $L'$ to $L$, and $X'$ to $X$. This gives:
  \begin{itemize}
    \item $w_{25}=0010$ 
    \item $\displaystyle R_{25} =\left\{ 
        \begin{array}{rcl}
          D&=&0000000000000000000000000111111111111111111111=0^{25}1^{21}\\ 
          L&=&[\set{2,3}] \\
          X&=&[0111000000]
        \end{array} \right.$
  \end{itemize}

This is where our example ends.

\medskip

Let $V_i$ be the vector $V$ restricted to its $i$ first elements.
We will show that the pair $(R_i,w_i)$ at some step $i$ allows to recover $V_i$.

\begin{lemma}\label{lem:inj}
  After $i$ steps of the algorithm \ap, the pair $(R_i,w_i)$ permits to recover $V_i$.
\end{lemma}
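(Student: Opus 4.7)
The plan is induction on $i$. The base case $i=0$ is trivial, since $R_0=(\varepsilon,[\,],[\,])$, $w_0=\varepsilon$ and $V_0$ is the empty vector. For the inductive step I show that from $(R_i,w_i)$ one can reconstruct both $V[i]$ and the previous pair $(R_{i-1},w_{i-1})$; applying the induction hypothesis to $(R_{i-1},w_{i-1})$ then recovers $V_{i-1}$, and appending $V[i]$ yields $V_i$. The whole recovery is driven by looking at the last bit of the binary word $D$ stored in $R_i$.

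If the last bit of $D$ is $0$, then no occurrence of $p$ was completed at step $i$, so $w_i=w_{i-1}\cdot V[i]$. Hence $V[i]$ is the last letter of $w_i$, $w_{i-1}$ is $w_i$ with its last letter removed, and $R_{i-1}$ is obtained from $R_i$ by deleting the trailing $0$ of $D$ (the vectors $L$ and $X$ are unchanged).

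If the last bit of $D$ is $1$, then by construction the suffix of $D$ is of the form $01^\ell$ for a unique $\ell\ge 1$ (read off as the length of the maximal trailing run of $1$'s), and this encodes an occurrence of $p$ of length $\ell$ that was erased at step $i$. Let $L'$ and $X'$ be the last entries of $L$ and $X$. Sorting $L'=\{L_1<L_2<\cdots<L_{k-1}\}$ gives the lengths $\ell_1=L_1$ and $\ell_j=L_j-L_{j-1}$ for $2\le j\le k-1$. The remaining length $\ell_k$ is then determined by the equation $\ell=\sum_{j=1}^k c_j\,\ell_j$, where $c_j\ge 1$ is the number of occurrences of the $j$-th variable $A_j$ in the fixed (doubled) pattern $p$; thus $\ell_k=\bigl(\ell-\sum_{j<k}c_j\ell_j\bigr)/c_k$ is recoverable from $\ell$, $L'$ and $p$ alone. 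Reading the first $L_k=\sum_{j=1}^k\ell_j$ bits of $X'$ as consecutive blocks of lengths $\ell_1,\ldots,\ell_k$ yields the binary values of $A_1,\ldots,A_k$, and concatenating them according to $p$ reconstructs the erased factor $f$. Then $V[i]$ is the last letter of $f$, $w_{i-1}$ equals $w_i\cdot f$ with its last letter removed, and $R_{i-1}$ is obtained from $R_i$ by deleting the suffix $01^\ell$ of $D$ together with the last entries of $L$ and $X$.

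The routine verification is that each of the two cases truly inverts one iteration of \ap. The point I expect to require the most care is the recovery of $\ell_k$, since the record $L'$ only stores $k-1$ of the $k$ prefix sums; this is precisely where one uses that $p$ is known, that $\ell$ can be read off from the trailing run of $1$'s in $D$, and that every variable (in particular $A_k$) appears in $p$, so that $c_k\neq 0$ and the linear equation for $\ell_k$ has a unique integer solution.
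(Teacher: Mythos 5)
Your proposal is correct and follows essentially the same route as the paper's proof: induction on $i$, case analysis on whether the suffix of $D$ is $0$ or $01^\ell$, recovery of the variable lengths from the last entry of $L$ (using that $p$ and $\ell$ are known to get $\ell_k$), reconstruction of the erased factor from the last entry of $X$, and then unwinding one step to apply the induction hypothesis. Your explicit formula $\ell_k=\bigl(\ell-\sum_{j<k}c_j\ell_j\bigr)/c_k$ merely spells out the step the paper states in words.
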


\begin{proof}
  Before step 1, we have $w_0=\varepsilon$, $R_0=(\varepsilon,[\ ],[\ 
  ])$, and $V_0=\varepsilon$.
  Let $R_i=(D,L,X)$ be the record after step $i$, with $1\le i\le t$. 
  \begin{itemize}
  \item Suppose that $0$ is a suffix of $D$. This means that at step
    $i$, no occurrence of $p$ was found: the algorithm appended $V[i]$
    to $w_{i-1}$ to get $w_i$. Therefore $V[i]$ is the last letter of
    $w_i$, say $x$.  Then the word $w_{i-1}$ is obtained from $w_i$ by
    erasing the last letter and the record $R_{i-1}$ is obtained from
    $R_i$ by removing the suffix $0$ of $D$. We recover $V_{i-1}$ from
    $(R_{i-1},w_{i-1})$ by induction hypothesis and we obtain
    $V_i=V_{i-1}\cdot x$.
  
  \item Suppose now that $01^\ell$ is a suffix of $D$. This means that
    an occurrence $y$ of $p$ which forms a factor $f$ of length $\ell$
    has been created during step $i$. The last element $L'$ of $L$ is
    a $(k-1)$-set $L'=\set{L_1,L_2,\ldots,L_{k-1}}$ and the last
    element $X'$ of $X$ is a binary word of length
    $\floor{\frac{\ell}{2}}$. Let $\ell_1=L_1$ and for $2\le s\le
    k-1$, let $\ell_s=L_{s}-L_{s-1}$. So, for $1\le s\le k-1$,
    $\ell_s$ is clearly the length of the variable $A_s$ of $p$ in the
    occurrence $y$ by construction of $L'$. We know the pattern $p$,
    the total length of the factor $f$ (that is $\ell$) and the
    lengths of the $k-1$ first variables of $p$ in $f$, so we are able
    to compute the length $\ell_k$ of the last variable $A_k$.  So we
    are now able to recover the occurrence $y$ of $p$: the first
    $\ell_1$ letters of $X'$ correspond to $A_1$, the next $\ell_2$
    letters correspond to $A_2$ and so on ($X'$ may contain some $0$'s
    at the end which are not relevant). It follows that the factor $f$
    is completely determined. So $w_{i-1}$ is obtained from $w_i\cdot
    f$ by removing the last letter $x$ of $f$, this letter $x$ being
    $V[i]$ (the appended letter to $w_{i-1}$ at step $i$ to get
    $w'_i$). The record $R_{i-1}$ is obtained from $R_i$ as follows:
    remove the suffix $01^\ell$ from $D$; remove the last element of
    $L$ and the last element of $X$.  We recover $V_{i-1}$ from
    $(R_{i-1},w_{i-1})$ by induction hypothesis and we obtain
    $V_i=V_{i-1}\cdot x$.
  \end{itemize}
\end{proof}

The previous lemma proves that distinct input vectors cannot
correspond to the same pair $(R_t,w_t)$. So we get $|\mathcal{V}|\le
|\mathscr{O}|$.

\subsection{Analysis of $\mathcal{R}$}

Now we compute $|\mathcal{R}|$. Let $R=R_t=(D,L,X)$ be a given record
produced by an execution of \ap. Let $\mathcal{D}$, $\mathcal{L}$ and
$\mathcal{X}$ be the set of such binary words $D$, of such
$(k-1)$-sets of non-zero integers $L$, and of such vectors of binary
words $X$, respectively. We thus have $|\mathcal{R}|\le
|\mathcal{D}|\times|\mathcal{L}|\times|\mathcal{X}|$.

Let us give some useful information in order to get upper bounds on
$|\mathcal{D}|$, $|\mathcal{X}|$, and $|\mathcal{L}|$. The algorithm
runs in $t$ steps.  At each step, one letter is appended to $w$, so
$t$ letters have been appended and therefore the number of erased
letters during the execution of the algorithm is $t-|w_t|$. At some
steps, an occurrence of $p$ appears and forms a factor which is
immediately erased. Let $m$ be the number of erased factors during the
execution of the algorithm.  Let $f_i$, $1\le i\le m$, be the $m$
erased factors.  We have $|f_i|\ge q(k)$ since each variable of $p$ is
a non-empty word and $p$ has length at least $q(k)$. Moreover, we have
$\sum_{1\le i\le m}|f_i| =t-|w_t|\le t$.  Each time a factor $f_i$ is
erased, we add an element to $L$ and $X$, so $|L|=|X|=m$.

\subsubsection{Analysis of $\mathcal{D}$}

In the binary word $D$, each $0$ corresponds to an
appended letter during the execution of the algorithm and each $1$
corresponds to an erased letter. Therefore, $D$ has length $2t-|w_t|$.
Observe that every prefix in $D$ contains at least as many $0$'s as $1$'s.
Indeed, since a $1$ corresponds to an erased letter $x$, this
letter $x$ had to be added first and thus there is a $0$ before that
corresponds to this $1$. The word $D$ is therefore a partial Dyck word.
Since any erased factor $f_i$ has length at least $q(k)$, any
maximal sequence of $1$'s (which is called a \emph{descent} in the
sequel) in $D$ has length at least $q(k)$. So $D$ is a partial Dyck
words with $t$ $0$'s such that each descent has length at least $q(k)$.
The following two lemmas due to Esperet and Parreau~\cite{Esperet:2012}
give an upper bound on $|\mathcal{D}|$.

Let $C_{t,r,d}$ (resp. $C_{t,d}$) be the number of partial Dyck words
with $t$ $0$'s and $t-r$ $1$'s (resp. Dyck words of length $2t$) such
that all descents have length at least $d$. Hence, we have $|\mathcal{D}|
\le C_{t,|w_t|,q(k)}$.

\begin{lemma}\cite{Esperet:2012}\label{louis1}
  $C_{t,r,d}\le C_{t+d-1,d}$.
\end{lemma}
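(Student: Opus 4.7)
The plan is to construct an explicit injection $\phi: C_{t,r,d} \to C_{t+d-1,d}$, which will immediately give the desired inequality by comparing cardinalities.

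For the main case $r \ge 1$, I would define $\phi(D) := D \cdot 0^{d-1} \cdot 1^{r+d-1}$. A direct check shows that $\phi(D)$ has total length $(2t-r)+(d-1)+(r+d-1)=2(t+d-1)$ with equal numbers of zeros and ones (namely $t+d-1$ each); that every prefix contains at least as many zeros as ones (the prefixes contained in $D$ satisfy this by hypothesis, the appended $0^{d-1}$ only increases the excess of zeros, and the appended $1^{r+d-1}$ brings the running excess back down to zero); and that the descents of $\phi(D)$ are exactly those of $D$ (each of length at least $d$ by assumption) together with a single new trailing descent of length $r+d-1 \ge d$. Hence $\phi(D) \in C_{t+d-1,d}$. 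Injectivity is immediate since the appended $0^{d-1}$ block separates the final $1$-run from any trailing ones in $D$, so this final $1$-run has length exactly $r+d-1$; one recovers $D$ by stripping the last $r+2d-2$ symbols.

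The boundary case $r=0$ requires a small adjustment, because the construction above would yield a trailing descent of length only $d-1$. Assuming $t\ge 1$, I would instead exploit the last descent already present in $D$: write $D = D'\cdot 1^b$, where $b \ge d$ is the length of $D$'s last descent and $D'$ ends in $0$ (or is empty, but this can only occur in the degenerate case $t=0$), and set $\phi(D) := D'\cdot 0^{d-1}\cdot 1^{b+d-1}$. The same verification shows $\phi(D)\in C_{t+d-1,d}$, since the new trailing descent now has length $b+d-1 \ge 2d-1\ge d$; injectivity follows by reading off $b$ from the trailing $1$-run length and recovering $D'$, hence $D$.

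The only real obstacle is this case distinction on $r$: ensuring that the new trailing descent always has length at least $d$ forces the separate treatment of $r=0$. Everything else (the length count, the Dyck property, and the reconstruction procedure) is routine bookkeeping once the appended suffix is fixed.
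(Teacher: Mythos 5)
The paper gives no proof of this lemma --- it is quoted directly from Esperet and Parreau --- so there is nothing in-paper to compare against; your injection $D\mapsto D\cdot 0^{d-1}\cdot 1^{r+d-1}$ is exactly the standard ``complete the partial Dyck word by one padded descent'' argument from that reference, and it is correct, including the necessary separate treatment of $r=0$ (where the naive padding would leave a final descent of length only $d-1$). The only caveat is the fully degenerate case $t=r=0$, which you rightly flag: there the literal inequality even fails for $d\ge 2$ (since $C_{0,0,d}=1$ while $C_{d-1,d}=0$), but this case never arises in the application, where $t$ is large and $r=|w_t|\ge 0$ with $t\ge 1$.
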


Hence, we have  $|\mathcal{D}|\le C_{t+q(k)-1,q(k)}$.

Let $\phi_d(x) = 1+\sum_{i\ge d}x^i = 1+\frac{x^d}{1-x}$.

\begin{lemma}\cite{Esperet:2012}\label{louis2}
  Let $d$ be an integer such that the equation $\phi_d(x)-x\phi_d'(x)=0$
  has a solution $\tau$ with $0<\tau<r$, where $r$ is the radius of
  convergence of $\phi_d$. Then $\tau$ is the unique solution of the
  equation in the open interval $(0,r)$. Moreover, there exists a
  constant $c_d$ such that $C_{t,d}\le c_d\gamma_d^tt^{-\frac{3}{2}}$
  where $\gamma_d=\phi_d'(\tau)=\frac{\phi_d(\tau)}{\tau}$.
\end{lemma}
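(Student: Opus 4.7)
The plan is to analyze the generating function $F_d(x) = \sum_{t\ge 0} C_{t,d}\, x^t$ via singularity analysis in the style of Flajolet--Sedgewick, equivalently the simply-generated-tree framework of Meir--Moon. First I would set up a combinatorial decomposition yielding a functional equation $Y(x) = x\,\phi_d(Y(x))$ for an auxiliary series $Y$ closely related to $F_d$. The intuition is that $\phi_d(u) = 1 + \sum_{i \ge d} u^i$ is precisely the degree-generating series of a tree family in which each node has either zero children or at least $d$ children; by decomposing the Dyck words counted by $C_{t,d}$ (or the plane trees in bijection with them, chosen so that the condition ``every maximal descent has length $\ge d$'' is encoded as ``every internal node has out-degree $\ge d$''), one arrives at such an implicit equation.

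The second step is analytic. Uniqueness of $\tau$ follows by setting $g(x) = \phi_d(x) - x\phi_d'(x)$ and computing $g'(x) = -x\phi_d''(x)$, which is strictly negative on $(0,r)$ because $\phi_d''(x) = \sum_{i\ge d} i(i-1)x^{i-2} > 0$ there. Thus $g$ is strictly decreasing on $(0,r)$ starting from $g(0) = 1 > 0$, so the assumed zero $\tau$ is unique, and the equalities $\gamma_d = \phi_d'(\tau) = \phi_d(\tau)/\tau$ are the two rearrangements of $\phi_d(\tau) = \tau\phi_d'(\tau)$. Next I would invoke the transfer theorem for the smooth implicit-function schema (Flajolet--Sedgewick, Theorem VII.3). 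The characteristic equation $\phi_d(\tau) = \tau\phi_d'(\tau)$ is exactly the ``smooth inverse'' condition guaranteeing that $Y$ has a square-root singularity at $\rho_d = \tau/\phi_d(\tau) = 1/\gamma_d$, that this is the unique dominant singularity on the circle of convergence, and hence that $[x^t]Y(x) = \Theta\bigl(\rho_d^{-t}\,t^{-3/2}\bigr) = \Theta\bigl(\gamma_d^t\,t^{-3/2}\bigr)$. Passing from $Y$ to $F_d$ then yields the claimed bound $C_{t,d} \le c_d\,\gamma_d^t\, t^{-3/2}$ for a suitable constant $c_d$.

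The main obstacle I expect is the bijective step: the descent-length constraint does not transfer cleanly through the naive plane-tree/Dyck-word correspondence, so the decomposition has to be set up with some care (for example via an arch decomposition of Dyck words, or by peeling off maximal descents through a last-passage decomposition) in order to match the ``descent $\ge d$'' condition to the node-degree condition encoded by $\phi_d$. Once $Y = x\phi_d(Y)$ is in hand, the rest is an essentially mechanical application of singularity analysis, with aperiodicity of $\phi_d$ (whose support already contains $1$ and $x^d$) ruling out any spurious additional dominant singularities that would otherwise require separate treatment.
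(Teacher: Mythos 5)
The paper itself gives no proof of this lemma: it is imported verbatim from Esperet and Parreau~\cite{Esperet:2012}, and your sketch follows essentially the same route as their argument --- the functional equation $y=x\phi_d(y)$ for plane trees whose internal nodes have outdegree at least $d$, uniqueness of $\tau$ from the monotonicity of $\phi_d(x)-x\phi_d'(x)$, and the square-root-singularity (Meir--Moon / Flajolet--Sedgewick smooth implicit-function schema) asymptotics yielding $c_d\gamma_d^t t^{-3/2}$ with $\gamma_d=\phi_d'(\tau)=\phi_d(\tau)/\tau$. The one step you flag as a worry, matching the descent condition to a degree condition, is indeed where care is needed, but it is classical: one can choose the Dyck-word/plane-tree correspondence so that the multiset of maximal descent lengths equals the multiset of outdegrees of internal nodes, so ``every descent has length at least $d$'' becomes ``every internal node has at least $d$ children'' and $\phi_d(u)=1+\sum_{i\ge d}u^i$ is exactly the right degree series; aperiodicity also holds since both $u^d$ and $u^{d+1}$ occur in $\phi_d$, so your analytic conclusion, and hence the stated upper bound, goes through.
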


The solution of the equation $\phi_{d}(x)-x\phi_{d}'(x)=0$
is equivalent to $P(x)=(1-x)^2+(1-d)x^{d}+(d-2)x^{d+1}=0$.
The radius of convergence $r$ of $\phi_{d}$ is $1$ and since
$P(0)=1$ and $P(r)=-1$, $P(x)=0$ has a solution $\tau$ in the open
interval $(0,r)$. By Lemma~\ref{louis2}, this solution is unique and,
for some constant $c_{d}$, we have $C_{t+d-1,d}\le
c_{d}\gamma_d^{t+d-1}({t+d-1})^{-\frac{3}{2}}$ with
$\gamma_d=\phi'_{d}(\tau)$. We clearly have $C_{t+d-1,d}=o(\gamma_d^t)$.
So, we can compute $\gamma_d$ for $d$ fixed. We will use the following bounds:
$\gamma_{24}\le 1.27575$, $\gamma_{40}\le 1.15685$, and $\gamma_{100}\le 1.08603$.
Note that when $d$ increases, $\gamma_d$ decreases. 

So, by Lemmas~\ref{louis1} and~\ref{louis2}, when $t$ is large enough,
we have $|\mathcal{D}| < 1.27575^t$ (resp. $|\mathcal{D}| <
1.15685^t$, $|\mathcal{D}| < 1.08603^t$) if the length of any descent
is at least $24$ (resp. $48$, $100$).

\subsubsection{Analysis of $\mathcal{X}$}

Each element $X'$ of $X$ corresponds to an erased factor $f_i$ and by
construction $|X'| = \floor{\frac{|f_i|}{2}}$. So the sum of the
lengths of the elements of $X$ is $\floor{\frac{|f_1|}{2}} +
\floor{\frac{|f_2|}{2}} + \ldots + \floor{\frac{|f_m|}{2}}\le
\floor{\frac{t}{2}}$. Thus, the vector $X$ corresponds to a binary
word of length at most $\floor{\frac{t}{2}}$. Therefore
$|\mathcal{X}|\le 2^{\floor{\frac{t}{2}}}\le (\sqrt{2})^t$.

\subsubsection{Analysis of $\mathcal{L}$}

Each element $L'=\set{L_1,L_2,\ldots,L_{k-1}}$ of $L$ corresponds to
an erased factor $f_i$ and by construction each $L_j\in L'$ corresponds to the
sum of the lengths of the $j$ first variables of $p$ in $f_i$.

Let $h_k(\ell)$ be the number of such $(k-1)$-sets $L'$ that
correspond to factors of length $\ell$. Recall that $|f_i|\ge q(k)$,
so $h_k(\ell)$ is defined for $k\ge 4$ and $\ell\ge q(k)$.  Each of
the $m$ elements of $L$ corresponds to an erased factor, so
$|\mathcal{L}|\le h_k(|f_1|)\times h_k(|f_2|)\times\ldots\times
h_k(|f_m|)$.  Let $g_k(\ell)=h_k(\ell)^\frac{1}{\ell}$ defined for
$k\ge 4$ and $\ell\ge q(k)$.  Then $\mathcal{L}\le
g_k(|f_1|)^{|f_1|}\times g_k(|f_2|)^{|f_2|}\times \ldots\times
g_k(|f_m|)^{|f_m|}$. So, if we are able to upper-bound $g_k(\ell)$ by
some constant $c$ for all $\ell\ge q(k)$, then we get
$|\mathcal{L}|\le c^{|f_1|}\times c^{|f_2|}\times\ldots\times
c^{|f_m|}\le c^t$.

Now we bound $g_k(\ell)$ using two different methods
depending on the value of $k$ and the length $q(k)$ of $p$. 

\paragraph{Bound on $g_k(\ell)$ for $k=4$, $\ell\ge 100$\ \ or\ \ $k\ge 5$,
 $\ell\ge48$}\label{par:l1}

For any factor $f_i$, we have $|A_1\cdot A_2\cdot\ldots\cdot A_k|\le
\floor{|f_i|/2}$ since $p$ is doubled (each variable appears at least
twice). For a given $L'=\set{L_1,L_2,\ldots,L_{k-1}}$ that corresponds
to some factor $f_i$, we have $L_{k-1} = |A_1\cdot A_2\cdot\ldots\cdot
A_{k-1}|\le \floor{|f_i|/2}$. Therefore, $L'$ is a $(k-1)$-set of
distinct non-zero integers at most $\floor{|f_i|/2}$, i.e. $k-1$
integers chosen among integers between $1$ and $\floor{|f_i|/2}$; thus
$h_k(\ell)\le {\floor{\ell/2} \choose k-1}$ and so $g_k(\ell)\le
{\floor{\ell/2} \choose k-1}^{\frac{1}{\ell}}$.  We can upper-bound
$g_k(l)$ by $\overline{g_k}(l) =
\left(\frac{\left(\floor{\frac{\ell}{2}}\right)^{k-1}}{(k-1)!}\right)^{\frac{1}{\ell}}$
for $\ell\ge q(k)$. The function $\overline{g_k}(\ell)$ is decreasing
for $\ell\ge q(k)$; so $g_k(\ell)\le\overline{g_k}(\ell)\le
\overline{g_k}(q(k)) =
\left(\frac{\left(\floor{\frac{q(k)}{2}}\right)^{k-1}}{(k-1)!}\right)^{\frac{1}{q(k)}}$
for all $\ell\ge q(k)$. Moreover, we can see that
$\overline{g_{k}}(q(k))\le \overline{g_5}(48)$ for all $k\ge 5$.

Computing this upper bound, we get $g_k(\ell)\le
\overline{g_k}(q(k))\le \overline{g_5}(48) < 1.21973$ for all $k\ge 5$
and $\ell\ge 48$ and $g_4(\ell)\le \overline{g_4}(100) < 1.10456$ for
all $\ell\ge 100$.

\paragraph{Bound on $g_4(\ell)$ for $24\le \ell\le 99$} \label{par:l2}

The second method to bound the size of $g_4(\ell)$ is based on ordinary
generating functions (OGF). Here, $k=4$, so let $A_1,A_2,A_3,A_4$ be
the four variables of $p$ and let $a_i$ be the number of apparitions
of $A_i$ in $p$. Therefore, $a_1+a_2+a_3+a_4=|p|$. Recall that each
variable appears at least twice in $p$ since $p$ is doubled, so
$a_i\ge 2$. Moreover, a factor of length $\ell$, with $24\le\ell\le 99$, is
necessarily an occurrence of a pattern of length between $24$ and
$99$. So we just have to consider patterns $p$ with $24\le|p|\le 99$.

Given $L'=\set{L_1,L_2,L_3}$ an element of $L$ that corresponds to
some factor $f_i$, we can compute the lengths $\ell_i$ of each
variable $A_i$ in $f_i$ ($\ell_1 = L_1$, $\ell_i = L_i-L_{i-1}$ for
$i\in\set{2,3}$ and $\ell_4 =
\frac{|f_i|-(a_1\ell_1+a_2\ell_2+a_3\ell_3)}{a_4}$). Recall that
$\ell_i\ge 1$ since each variable of $p$ is a non-empty word. Let
$\mathcal{A}_{|p|}=\sum_{i\ge |p|}b_i\ x^i$ be the OGF of such sets
$L'$, i.e. $b_i$ is the number of $3$-sets $\set{L_1,L_2,L_3}$ that
corresponds to a factor of length $i$ formed by an occurrence of a
pattern of length $|p|$ (that is $b_i$ is the number of $4$-tuples
$(\ell_1,\ell_2,\ell_3,\ell_4)$ with $\ell_i\ge 1$ such that
$a_1\times \ell_1+ a_2\times \ell_2+ a_3\times \ell_3+a_4\times
\ell_4=i$). So by definition of $h_4$, we have $h_4(\ell)=b_\ell$.

This kind of OGF has been studied and is similar to the well-known
problem of counting the number of ways you can change a
dollar~\cite{polya:1983}: you have only five types of coins (pennies,
nickels, dimes, quarters, and half dollars) and you want to count the
number of ways you can change any amount of cents. So, let
$\mathcal{C}=\sum_{i\ge 1}c_i\ x^i$ be the OGF of the problem and thus
any $c_i$ is the number of ways you can change $i$ cents. Then, for example,
$c_{100}$ corresponds to the number of ways you can change a
dollar. Here, $\mathcal{C}=\frac{1}{1-x}\times \frac{1}{1-x^5}\times
\frac{1}{1-x^{10}}\times \frac{1}{1-x^{25}}\times \frac{1}{1-x^{50}}$.

In our case, we have four coins, each of them has value $a_i$ (so we
can have different types of coins with the same value) and each type
of coins appears at least once (since $\ell_i\ge 1$). Thus we get
$\mathcal{A}_{|p|}=\sum_{i\ge |p|}b_i\ x^i =
\frac{x^{a_1}}{1-x^{a_1}}\times\frac{x^{a_2}}{1-x^{a_2}}
\times\frac{x^{a_3}}{1-x^{a_3}}\times\frac{x^{a_4}}{1-x^{a_4}}$.
We use Maple for our computation. For each $24\le |p|\le 99$, for each
$4$-tuple $(a_1,a_2,a_3,a_4)$ such that $\sum a_i = |p|$, we consider
the associated OGF $\mathcal{A}_{|p|}$ and we compute, using Maple,
the truncated series expansion up to the order $100$, that gives
$\mathcal{A}_{|p|} = b_{24} x^{24} + b_{25} x^{25} + \ldots + b_{99}
x^{99} + O(x^{100})$ with explicit values for the coefficients $b_i$.
So, for any $24\le \ell\le 99$, $g_4(\ell)$ is upper-bounded by the
maximum of $(b_\ell)^{\frac{1}{\ell}}$ taken oven all
$\mathcal{A}_{|p|}$. Maple gives that $(b_\ell)^{\frac{1}{\ell}}$ is
maximal for $|p|=24$, $(a_1,a_2,a_3,a_4) = (2,2,2,18)$, and $\ell=46$:
in this case, $b_{46}=84$ (i.e. there exist $84$ distinct $3$-sets
$L'$ that correspond to some factor of length $46$ formed by an
occurrence of a pattern of length $24$). So, $g_4(\ell)\le
84^\frac{1}{46}< 1.10112$ for all $24\le \ell\le 99$, $k=4$ and
$24\le |p|\le 99$.

\paragraph{Bound $g(\ell)$ for all $k\ge 4$}

We can deduce from Paragraphs~\ref{par:l1} and~\ref{par:l2} the
following.

If $k=4$, then $g_4(\ell)< 1.10112$ for $24\le\ell\le 99$ and
$g_4(\ell)< 1.10456$ for $\ell\ge 100$. So for $k=4$, we have
$|\mathcal{L}|< (1.10456)^t$.

If $k\ge 5$, then $g_k(\ell)< 1.21973$ for $\ell\ge q(k)$.
So for $k\ge 5$, we have $|\mathcal{L}|< (1.21973)^t$.

\subsection{Conclusion}

Aggregating the above analysis, we get the following. For $k\ge 5$, we
have $q(k)\ge 48$: then $|\mathcal{R}|\le
|\mathcal{D}|\times|\mathcal{L}|\times|\mathcal{X}|\le (1.15685\times
1.21973\times\sqrt{2})^t =o(2^t)$. For $k=4$, we have $q(k)\ge 24$:
then
$|\mathcal{R}|\le|\mathcal{D}|\times|\mathcal{L}|\times|\mathcal{X}|
\le(1.27575\times1.10456\times\sqrt{2})^t = o(2^t)$.

Thus for all $k\ge 4$, $|\mathcal{R}| = o(2^t)$ and so we obtained the
desired contradiction: $$2^t=|\mathcal{V}|\le |\mathscr{O}| \le 2^n\times|\mathcal{R}|= 2^n
\times o(2^t)=o(2^t).$$

\section{Conclusion}

In our results, we heavily use the fact that the patterns are doubled.
The fact that the patterns are long is convenient for our proofs but
does not seem so important. So we ask whether every doubled pattern
is 3-avoidable. By the remarks in Section~\ref{sec:intro} and by
Lemma~\ref{k6doubled}, the only remaining cases are doubled patterns
with $4$ and $5$ variables. Also, does there exist a finite $k$
such that every doubled pattern with at least $k$ variables is
2-avoidable ?  We know that such a $k$ is at least 5 since ABCCBADD is
not 2-avoidable.

\end{document}